\tikzstyle{vertex}=[circle,fill=black!15,minimum size=20pt,inner sep=0pt,font=\footnotesize]
\tikzstyle{smallvertex}=[vertex,minimum size=10pt]
\tikzstyle{operator}=[vertex,fill=black!1]
\tikzstyle{smalloperator}=[circle,inner sep=0pt,minimum size=10pt,font=\footnotesize]
\tikzstyle{source} = [vertex, fill=red!34]
\tikzstyle{supersource} = [vertex, fill=blue!34,minimum size=15pt]
\tikzstyle{hiddensource} = [vertex, fill=red!8,minimum size=10pt]
\tikzstyle{smallsource} = [vertex, fill=red!34,minimum size=10pt]
\tikzstyle{receiver} = [vertex, fill=green!34]
\tikzstyle{smallreceiver} = [vertex, fill=green!34,minimum size=10pt]
\tikzstyle{edge} = [draw,thick,->]
\tikzstyle{undirect_edge} = [draw, thick]
\tikzstyle{dedge} = [edge,dotted]
\tikzstyle{redge} = [edge,color=red]
\tikzstyle{bedge} = [edge,color=blue]
\tikzstyle{gedge} = [edge,color=green]
\tikzstyle{medge} = [edge,color=magenta]
\tikzstyle{oedge} = [edge,color=orange]
\tikzstyle{bredge} = [edge,color=brown,line width=2pt]
\tikzstyle{weight} = [font=\footnotesize]
\tikzstyle{selected edge} = [draw,line width=5pt,-,red!50]
\newcommand{\bi}{\begin{itemize}}
\newcommand{\ei}{\end{itemize}}
\newcommand{\bal}{\begin{align}}
\newcommand{\eal}{\end{align}}
\newcommand{\EE}{\mathbb{E}}
\newcommand{\PP}{\mathbb{P}}
\newcommand{\bA}{\mathbf{A}}
\newcommand{\bQ}{\mathbf{Q}}
\newcommand{\bN}{\mathbf{N}}
\newcommand{\bD}{\mathbf{D}}
\newcommand{\bI}{\mathbf{I}}
\newcommand{\bP}{\mathbf{P}}
\newtheorem{theorem}{Theorem}
\newtheorem{lemma}{Lemma}
\newtheorem{corollary}{Corollary}
\theoremstyle{definition}
\newtheorem{definition}{Definition}
\theoremstyle{remark}
\newtheorem{remark}{Remark}
\begin{document}

\title{Maximum Entropy Functions: Approximate G\'acs-K\"orner for  Distributed Compression}

\author{\IEEEauthorblockN{Salman Salamatian}
\IEEEauthorblockA{MIT}
\and
\IEEEauthorblockN{Asaf Cohen}
\IEEEauthorblockA{Ben-Gurion University of the Negev}
\and
\IEEEauthorblockN{Muriel M\'edard}
\IEEEauthorblockA{MIT}}

\maketitle

\begin{abstract}
Consider two correlated sources $X$ and $Y$ generated from a joint distribution $p_{X,Y}$. Their G\'acs-K\"orner Common Information, a measure of common information that exploits the combinatorial structure of the distribution $p_{X,Y}$, leads to a source decomposition that exhibits the latent common parts in $X$ and $Y$. Using this source decomposition we construct an efficient distributed compression scheme, which can be efficiently used in the network setting as well. Then, we relax the combinatorial conditions on the source distribution, which results in an efficient scheme with a helper node, which can be thought of as a front-end cache. This relaxation leads to an inherent trade-off between the rate of the helper and the rate reduction at the sources, which we capture by a notion of optimal decomposition. We formulate this as an approximate G\'acs-K\"orner optimization. We then discuss properties of this optimization, and provide connections with the maximal correlation coefficient, as well as an efficient algorithm, both through the application of spectral graph theory to the induced bipartite graph of $p_{X,Y}$.

\end{abstract}


\IEEEpeerreviewmaketitle

\section{Introduction}

Consider two distributed sources $X$ and $Y$ generated from a joint distribution $p_{X,Y}$. The fundamental compression limits of these distributed sources is given by the Slepian-Wolf rate region \cite{Slepian:1973wj}. While the fundamental limits are well known, and some low-complexity codes exist, implementations of Slepian-Wolf codes in practical scenarios are seldom seen. Indeed, most present-day systems use naive solutions to avoid redundancy such as deduplication. This is partly explained by more realistic problem instances, for example, one in which the sources are distributed in a network. In such a setting, distributed compression cannot be separated from the network operations, and should be done hand in hand to achieve the optimal rates \cite{Ramamoorthy:2006vr}. This brings complexity issues, as well as the implementation burden, which makes the optimal Slepian-Wolf codes not well suited for the task.

In this paper, we propose to look at a sub-optimal, but very efficient distributed coding technique that uses the combinatorial structure of the correlation for compression. Namely, we try to find a \emph{common part} in the sources $X$ and $Y$, and send this common part only once to the terminal. We then express $X$ and $Y$ given that common part, which result in an overall reduction of the necessary rates. It turns out that the right common part to analyze is the G\'acs-K\"orner common information, introduced in \cite{Gacs:1973vg}. This measure of common-information has a strong relationship with the combinatorial structure of the joint distribution $p_{X,Y}$, and in particular to its bipartite representation. However, the G\'acs-K\"orner common information does not capture, in general, the complete correlation between the sources, an restrict the joint distributions for which such a scheme is possible.
Instead, we relax this combinatorial condition, which leads to a scheme in which a helper is required. The role of the helper is to complement the distributed encoders if the common part is not exactly the same at both sources. By increasing the rate of the helper to the terminal, we can achieve better and better rate reduction.

This type of approach is particularly well suited to the network setting. As we are decomposing our sources into common and non-common parts, and transmit these latent parts through the network, we can use efficient network coding techniques without affecting the source code --- the decomposed sources are independent and the rate reduction comes from solely the decomposition. Moreover, very efficient single-source lossless coding can be used. 

The helper node can be thought of as a \emph{front-end cache}, which is well connected to the sources and is being used to significantly reduce the rates required in the last-mile to the terminals. Thus, when laying out a network, our result is useful in determining how to position the helper so it acts as a valuable interface between the sources and the terminals, \emph{which function of the data to store on it}, and what are the required rates.

\noindent \textbf{Main Contributions:} Our contributions are three-fold. First, we introduce a distributed coding technique based on the structure of the correlation between the two sources $X$ and $Y$, that we describe in terms of functions $\phi_X$ and $\phi_Y$. We start with the case in which a helper is not necessary, \emph{i.e.}, the functions $\phi_X$ and $\phi_Y$ always agree. The common part is then given by the G\'acs-K\"orner Common information. Later, we introduce a helper node, which allows us to relax this agreement condition. For each case, we describe a simple and efficient coding method, and compute its achievable region.\\
These regions reveal an inherent trade-off between the rate of the helper and the amount of rate reduction we can obtain at the sources. This trade-off is captured by the notion of \emph{optimal decomposition}. We describe optimal decomposition as an optimization problem over functions, give it an approximate G\'acs-K\"orner formulation and discuss some of its properties in the context of coding and optimization.
Finally, as the problem of finding an optimal decomposition is in general hard, we provide an approximate algorithm based on spectral graph theory, and draw some connections with the maximal correlation coefficient.

\section{Background}

\subsection{Notation}

Throughout, we define finite random variables with capital letters, \emph{e.g.} X, taking values in the finite set $[n_X]:= \{1,\ldots,n_X\}$, where $n_X$ is the size of the alphabet.
We will be interested in correlated random variables $X$ and $Y$, taking value from a joint distribution $p_{X,Y}$. We let $\bP$ be the $n_X \times n_Y$ joint distribution matrix, that is $\{\bP\}_{i,j} = p_{X,Y}(i,j)$. We denote by $\mathbf{1}$ the vector of all ones whose size will be clear from the context. We also denote by $\mathbb{I}\{.\}$ the indicator function.

We will represent networks in the following way: let $\mathcal{G} = (\mathcal{V},\mathcal{E},\mathcal{C})$ be an acyclic directed graph with edge set $\mathcal{E}$, vertices $\mathcal{V}$ and an integer valued function on each edge, $\mathcal{C} : E \rightarrow \mathbb{N}$. The value $C(e)$ for $e \in \mathcal{E}$ represents the capacity of the communication link $e$ in bits per unit of time. In addition, let $s_X,s_Y \in \mathcal{V}$ be two sender nodes and let $t$ be a terminal node. We will denote values of min-cuts in the network by $\rho(\cdot;\cdot)$. For example, $\rho(s_1;t_1)$ represents the value of the min-cut from $s_1$ to $t_1$, and $\rho(s_1, s_1;t_2)$ the value of the min-cut from both $s_1$ and $s_2$ to $t_2$. 

\subsection{Disjoint Components and Common Information}

A useful point of view for our problem will be to look at \emph{the bipartite representation} of the joint distribution probability $p_{X,Y}$, which is defined below:

\begin{definition}
Let $p_{X,Y}$ be the joint probability distribution of $X,Y$. We denote by the \emph{bipartite} representation of $p_{X,Y}$ the bipartite graph with $n_X + n_Y$ nodes indexed by $v_{a,i}$ where $a \in \{X,Y\}$ and $i \in [n_a]$. We let an edge connect node $v_{X,i}$ to $v_{Y,j}$ only if $p_{X,Y}(i,j) > 0$, and let $p_{X,Y}(i,j)$ be the weight of that edge.
\end{definition}

This bipartite representation allows for a simple characterization of the G\'acs-K\"orner Common Information. Specifically, we have the following definition of common information:

\begin{definition}[\cite{Witsenhausen:1975vk}]
A set of nodes $\mathcal{C}$ such that there are no outgoing edges from $\mathcal{C}$ is called a connected component of the bipartite graph. We associate with each connected component $\mathcal{C}$ a weight $p(\mathcal{C}) = \sum_{n_x,n_y \in C} P(X=x, Y=y)$. We call \emph{the common information decomposition} of $p_{X,Y}$, the decomposition of the bipartite graph into a maximal number of connected components $\mathcal{C}_1,\ldots,\mathcal{C}_{k}$. Moreover, we denote by $K_{X,Y}$ the common information random variable representing the index of the connected component, with the natural distribution $(p(\mathcal{C}_1),\ldots,p(\mathcal{C}_k))$. The entropy of this random variable is $H(K_{X,Y}) = H(\mathbf{\mathcal{C}}) = \sum_{i=1}^k p(\mathcal{C}_i) \log \left( \frac{1}{p(\mathcal{C}_i)}\right)$.
\label{def:common_inf}\end{definition}

Definition \ref{def:common_inf} is equivalent to the usual definition of common information in the 2 random variables setting given in \cite{Gacs:1973vg}:
\begin{align}
K_{X,Y} = \underset{H(U|X) = H(U|Y) = 0}{\text{argmax}} H(U)
\end{align}
It is well known that the above optimization problem achieves its maximum point by letting $U$ be the index of the connected component, we can rewrite the optimization above by restrciting the space to these functions.

We now present an alternative formulation of the above optimization. While being equivalent in the most basic setting, this formulation will allow us to extend the notion of common information to cases where the $K_{X,Y}$ is zero or two small to be practical, and be the basis for future arguments in this paper. Precisely, consider functions $\phi_X$ and $\phi_Y$ mapping $[n_X]$ and $[n_Y]$ respectively, into a finite set. It is then easy to see that the common information then becomes the solution of the following optimization problem:
\begin{align}
\text{max}_{\phi_X,\phi_Y} \, & H(\phi_X(X)), \, \text{s.t.} \, \PP (\phi_X(X) \neq \phi_Y(Y)) = 0. \label{eq:gk_function_formulation}
\end{align}

\noindent where the optimization is over finite range functions $\phi_X, \phi_Y$.

\noindent \textbf{Example: } Let $P_{X_1,X_2}$ be defined as in joint probability table below with its corresponding 2-partite graph representation. Consider the function $f(i)$ that takes value $f(i) = 1$ if $i \in \{ 1,2 \}$, and $f(i) = -1$ if $i \in \{ 3,4\}$. Then the maximizing functions in \eqref{eq:gk_function_formulation} are $\phi_X = \phi_Y = f$.

\begin{minipage}{.45\textwidth}
\begin{minipage}[c]{.45\textwidth}
\centering
\begin{tikzpicture}[scale=1.2, auto,swap]
		\foreach \pos/ \name/ \num in {{(0,0)/x_4/4}, {(0,.5)/x_3/3}, {(0,1)/x_2/2}, {(0,1.5)/x_1/1}, {(2,0)/y_4/4},{(2,.5)/y_3/3},{(2,1)/y_2/2},{(2,1.5)/y_1/1}}
        \node (\name) at \pos {$\num$};
        \node (cal_Y) at (2.3,1.7) {$\mathcal{Y}$};
        \node (cal_X) at (-0.3,1.7) {$\mathcal{X}$};
        \foreach  \source/ \dest /\edgename  in {x_1/y_1/1, x_2/y_2/2, x_1/y_2/3, x_2/y_1/4, x_3/y_3/5, x_4/y_4/6, x_3/y_4/7, x_4/y_3/8}
        \path[undirect_edge] (\source) -- (\dest);
        \end{tikzpicture}
\end{minipage} 
\begin{minipage}[c]{.45\textwidth}
\centering
\begin{align*}
\bP = \frac{1}{8} \left[ \begin{array}{cccc}
1 & 1 & 0 & 0 \\
1 & 1 & 0 & 0 \\
0 & 0 & 1 & 1 \\
0 & 0 & 1 & 1
\end{array}\right]
\end{align*}
\end{minipage}
\end{minipage}
\subsection{Maximal Correlation Coefficient}

We now define an alternative measure of information between random variables $X$ and $Y$ namely, the maximal correlation, first defined in \cite{renyi1959}. For a given joint distribution $p_{X,Y}$, we let $\rho_m(X;Y) \vcentcolon = \sup_{\phi_X,\phi_Y} \EE [\phi_X(X) \phi_Y(Y)]$, where the maximization is taken over all mean zero real-valued functions $f$ and $g$ on $[n_x]$ and $[n_y]$, such that $\EE [\phi_X(X)^2] = \EE [\phi_Y(Y)^2] = 1$. For random variables taking values in a finite set, it has been shown that the maximal correlation coefficient is related to the spectrum of the matrix $\bP$ through the following identity proved in \cite{Witsenhausen:1975vk}:

\begin{theorem}\label{thm:max_corr_def}
Denote by $\bD_X$ the diagonal matrix with diagonal elements $(\bD_X)_{i,i} = p_X(i)$, and define $\bD_Y$ similarly. We let $\bQ \in \mathbb{R}^{n_x \times n_y}$ be the matrix defined as $\bQ \vcentcolon = \bD^{-1/2}_X \bP \bD^{-1/2}_Y$.
Its singular value decomposition is given by $\bQ = \mathbf{U} \mathbf{\Sigma} \mathbf{V}^T$, and we denote by $1 \leq \lambda_1 \leq \ldots \leq \lambda_{d}$ the ordered singular values (we refer the readers to \cite{Witsenhausen:1975vk} or \cite{flavio} to see why the highest singular value always takes value 1). In particular, $\lambda_1 = \rho^2_m(X;Y)$ is the squared maximal correlation coefficient of $p_{X,Y}$.
\end{theorem}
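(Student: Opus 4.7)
The plan is to recast the optimization defining $\rho_m(X;Y)$ as a constrained bilinear optimization over Euclidean vectors, apply a change of variables that turns it into a singular value problem for $\mathbf{Q}$, and then use the variational (Courant--Fischer) characterization of singular values.

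First, I would encode the candidate functions as vectors $f \in \mathbb{R}^{n_X}$ and $g \in \mathbb{R}^{n_Y}$ with $f_i = \phi_X(i)$ and $g_j = \phi_Y(j)$. A direct expansion then gives
\begin{align}
\EE[\phi_X(X)\phi_Y(Y)] = f^T \bP g, \quad \EE[\phi_X(X)^2] = f^T \bD_X f, \quad \EE[\phi_X(X)] = \mathbf{1}^T \bD_X f,
\end{align}
and analogously for $g$ with $\bD_Y$. Hence the maximal correlation problem becomes: maximize $f^T \bP g$ subject to $f^T \bD_X f = g^T \bD_Y g = 1$ and $\mathbf{1}^T \bD_X f = \mathbf{1}^T \bD_Y g = 0$.

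Next, I would substitute $\tilde f = \bD_X^{1/2} f$ and $\tilde g = \bD_Y^{1/2} g$, which is well-defined whenever the marginals are strictly positive (the degenerate coordinates can be dropped from the start). Under this change of variables the objective becomes $\tilde f^T \bQ \tilde g$, the norm constraints become $\|\tilde f\| = \|\tilde g\| = 1$, and the mean-zero constraints become orthogonality conditions $\tilde f \perp \bD_X^{1/2} \mathbf{1}$ and $\tilde g \perp \bD_Y^{1/2} \mathbf{1}$. A short computation shows that $\mathbf{D}_X^{1/2}\mathbf{1}$ and $\mathbf{D}_Y^{1/2}\mathbf{1}$ are unit vectors (since the marginals sum to one) and that they form a singular-vector pair of $\bQ$ with singular value one, because
\begin{align}
\bQ\,(\bD_Y^{1/2}\mathbf{1}) = \bD_X^{-1/2}\bP\mathbf{1} = \bD_X^{-1/2}\bD_X\mathbf{1} = \bD_X^{1/2}\mathbf{1},
\end{align}
and symmetrically on the left. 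This identifies the top singular value of $\bQ$ as $1$ and pins down the associated singular directions.

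Finally, I would invoke the variational characterization of singular values: the maximum of $\tilde f^T \bQ \tilde g$ over unit vectors constrained to be orthogonal to the top left and right singular vectors is exactly the second-largest singular value of $\bQ$, attained at the corresponding pair of singular vectors. Combined with the previous step, this yields $\rho_m(X;Y)$ as the second-largest singular value of $\bQ$, which under the paper's indexing convention is $\lambda_1$ (after the trivial unit singular value). The main subtlety I expect is justifying that the orthogonality constraints remove \emph{exactly} the top singular pair: this requires knowing that the singular value $1$ is simple, which can be argued from the fact that the bipartite graph of $p_{X,Y}$ is connected (otherwise one would be in the G\'acs--K\"orner regime of Definition \ref{def:common_inf}, in which case the statement has to be interpreted on each connected component separately). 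Once that multiplicity issue is handled, the rest of the argument is a straightforward application of the SVD.
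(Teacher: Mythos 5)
The paper itself gives no proof of this theorem---it is quoted as background with a pointer to Witsenhausen---so there is nothing internal to compare against; your proposal is the standard proof of that cited result and is essentially sound. The vectorization, the change of variables $\tilde f = \bD_X^{1/2}f$, the identification of $(\bD_X^{1/2}\mathbf{1},\bD_Y^{1/2}\mathbf{1})$ as a singular pair with singular value $1$, and the restricted variational characterization are exactly the right steps. Two refinements. First, exhibiting that singular pair only shows that $1$ is \emph{a} singular value of $\bQ$; to conclude it is the \emph{top} one you should add that $\tilde f^T\bQ\,\tilde g = \EE[\phi_X(X)\phi_Y(Y)] \leq \sqrt{\EE[\phi_X(X)^2]\,\EE[\phi_Y(Y)^2]} = 1$ for all unit $\tilde f,\tilde g$ by Cauchy--Schwarz, i.e.\ $\|\bQ\|_2 \leq 1$. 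Second, your closing worry about simplicity of the singular value $1$ is unnecessary: since $(\bD_X^{1/2}\mathbf{1},\bD_Y^{1/2}\mathbf{1})$ is a top singular pair, it can always be taken as $(u_1,v_1)$ in an orthonormal SVD basis $\bQ=\sum_k\sigma_k u_kv_k^T$, and then for any unit $\tilde f\perp u_1$, $\tilde g\perp v_1$ one has $\tilde f^T\bQ\,\tilde g=\sum_{k\geq2}\sigma_k(\tilde f^Tu_k)(v_k^T\tilde g)\leq\sigma_2$ by Cauchy--Schwarz, with equality at $(u_2,v_2)$. No multiplicity or connectivity assumption is needed, and indeed the identity must survive the disconnected (G\'acs--K\"orner) case, where both sides equal $1$.

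One genuine discrepancy remains between what you prove and what the statement literally says: your argument (correctly, and in agreement with Witsenhausen) yields $\rho_m(X;Y)=\sigma_2(\bQ)$, the second-largest \emph{singular value} of $\bQ$, whereas the theorem as printed asserts $\lambda_1=\rho_m^2(X;Y)$ with the $\lambda_i$ declared to be singular values of $\bQ$ (and with the ordering $1\leq\lambda_1\leq\cdots$ evidently reversed from what is intended). The squared form is consistent only if the $\lambda_i$ are read as eigenvalues of $\bQ^T\bQ$. You should either carry your argument one step further to that reading or state explicitly that what you establish is the unsquared identity for the singular values of $\bQ$; as written, your last sentence claims to prove a statement that differs from your conclusion by a square.
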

%

\section{An Efficient Distributed Coding Scheme}
\subsection{Using G-K common information}

Consider the classical Slepian-Wolf setting, despicted in Fig~\ref{fig:classic_sw}. In that setting, two separate encoders $E_X$ and $E_Y$ encode at rates $R_X$ and $R_Y$, respectively. The goal is to recover, in a lossless way, the pair of sources $(X,Y)$ at a destination node $D_{X,Y}$. We will be interested in efficient encoding and decoding for this canonical problem. In particular, when there is non-zero common information, \emph{i.e.} $H(K) > 0$, we can use a simple \emph{source decomposition} of the sources $X$ and $Y$.
Namely, let $X \to (X',K)$, where $K$ is the index of the connected component, and $X'$ is the position of $X$ inside that component. Note that this is a bijective function of $X$. Note that at the source $Y$ we can also do a similar decomposition $Y \to (Y',K)$.
Noticing that $K$ is common in both decompositions yield a very simple distributed coding scheme, in which $X'$ and $Y'$ are sent as if independent, but $K$ is sent only once to the receiver. All of these operations can be done efficiently, and the resulting coding is a zero-error distributed code.

\begin{theorem}[Coding with G-K Common Information \cite{technical}]
Let $K$ be the G-K Common information between $X$ and $Y$. Then, there exist an efficient zero error encoding and decoding of $X$ and $Y$ that operates at rates:
\begin{align}
R_X  \geq H(X|K) ,\quad &R_Y \geq H(Y|K) \nonumber, \\
R_X + R_Y  \geq H(X|K) + &H(Y|K) + H(K).
\end{align}
\end{theorem}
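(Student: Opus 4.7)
The argument rests on the source decomposition sketched just before the statement. Each value $x\in[n_X]$ belongs to exactly one connected component $\mathcal{C}_k$ of the bipartite graph of $p_{X,Y}$, so the map $x\mapsto (k,x')$ with $k=\phi_X(x)$ the component index and $x'$ the relative index of $x$ inside $\mathcal{C}_k$ is a bijection, and analogously for $Y$. The defining property of the G-K decomposition (Definition~\ref{def:common_inf}) is that whenever $p_{X,Y}(x,y)>0$ the nodes $x$ and $y$ share a component, so $\phi_X(X)=\phi_Y(Y)=K$ with probability one, and each encoder can compute $K$ locally from its own observation.

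I would then construct two ``corner'' schemes and obtain the full rate region by time-sharing. In the first, encoder $X$ applies a standard block lossless code (e.g.\ arithmetic coding, or block Huffman) to $X^n$ at rate arbitrarily close to $H(X)=H(X|K)+H(K)$; encoder $Y$ uses a conditional code indexed by $K$ to describe $Y^n$ given $K^n$ at rate arbitrarily close to $H(Y|K)$. The decoder first recovers $X^n$, extracts $K^n=\phi_X(X^n)$ componentwise, then selects the appropriate codebook to recover $Y^n$, and finally inverts the bijections to output $(X^n,Y^n)$. The second corner is symmetric, swapping the roles of $X$ and $Y$. Any $(R_X,R_Y)$ satisfying the three inequalities of the theorem is a convex combination of these two corners plus nonnegative slack, hence achievable by time-sharing with optional bit-padding.

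Zero error follows because the component maps are exact bijections and all prefix codes used are uniquely decodable; decoding is therefore exact on every block. Efficiency is equally immediate: the connected components of the bipartite representation can be found in $O(n_Xn_Y)$ time, the component-conditional Huffman codebooks are constructed in near-linear time in the alphabet sizes, and encoding/decoding run in linear time in block length. The key savings relative to independently coding $(X,Y)$ is precisely that $K$ is folded into $X$'s description rather than transmitted by both encoders, which is what collapses the sum rate from $H(X)+H(Y)=H(X|K)+H(Y|K)+2H(K)$ down to $H(X|K)+H(Y|K)+H(K)$.

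The one point that needs care is reconciling the \emph{zero-error} requirement with rates \emph{arbitrarily close} to the entropy bounds: a per-symbol Huffman code pays a one-bit overhead, so the theorem has to be read in the standard asymptotic sense --- for every $\varepsilon>0$ there exists a block length $n$ and a zero-error code on length-$n$ blocks whose rates exceed the three bounds by at most $\varepsilon$. I expect this to be the only substantive point to formalize; the combinatorial core of the argument, namely that $K$ is agreed upon by both encoders and therefore needs to be conveyed only once, follows directly from the G-K characterization via connected components of the bipartite graph of $p_{X,Y}$.
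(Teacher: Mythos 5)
Your proposal is correct and follows essentially the same route the paper sketches just before the theorem: the bijective decomposition $X \to (X',K)$, $Y \to (Y',K)$ via connected components, with $K$ conveyed only once (folded into one encoder's description) and the other source coded conditionally on $K$, plus time-sharing to sweep the dominant face. The asymptotic reading of ``zero error at rates within $\varepsilon$ of the bounds'' that you flag is indeed the right way to interpret the statement and is the only point requiring that extra care.
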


Note that the rates in this result depend on the value of the common information. Indeed, the higher $H(K)$, the greater is the reduction in rates. However, as we mentioned before,the common-information imposes a strong combinatorial condition on the joint distribution matrix. The next section aims at relaxing this strong condition by considering a helper node.


\begin{remark}
The proposed coding scheme is different than the coding that exist in zero-error distributed coding, see \emph{e.g.} \cite{koulgi2003zero,witsenhausen1976zero}. In general, the zero-error distributed coding sum-rate is better than the sum rate in the coding scheme above. However, this former type of methods do not generalize very well to any point on the achievable region, as there is generally very little flexibility in the achievable rates. This make these codes less suited for network extensions where the choice of the operating rates is given by the network topology.
\end{remark}
\subsection{General distributions with a helper}

We now aim at moving away from the combinatorial condition imposed on the joint distribution $p_{X,Y}$. In particular, say the bipartite graph is made of two \emph{almost} disjoint components, only connected by an edge of small weight $\epsilon$. In that case, the common information evaluates to 0. However, it would seem like being optimistic and disregarding this edge yield two disjoint components, and therefore a non-zero common information. To correct the problematic cases, we can consider an omniscient helping node, which simply sends a bit in case of an error.
This insight is developed in the theorem below, where we consider arbitrary binary functions of $X$ and $Y$, and decompose the sources based on these functions.

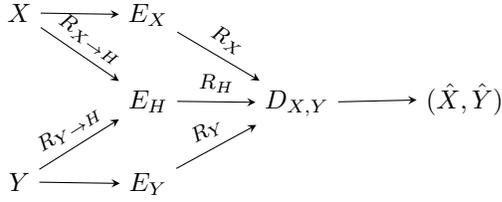
\begin{figure}
\centering
\begin{tikzpicture}
     \matrix (m) [matrix of math nodes, row sep=1.5em, column sep=3em]{
    X & E_X &  &  \\
      & E_{H} &  D_{X,Y} &  (\hat{X},\hat{Y}) \\
    Y & E_Y &  &  \\};   
    \path[-stealth]
    (m-1-1) edge (m-1-2) edge node[anchor = south,sloped] {\footnotesize $R_{X \to H}$} (m-2-2)
    (m-1-2) edge node[anchor = south, sloped] {\footnotesize $R_X$} (m-2-3)
    (m-2-2) edge node[auto] {\footnotesize $R_H$} (m-2-3)
    (m-3-1) edge (m-3-2)  edge node[anchor = south,sloped] {\footnotesize $R_{Y \to H}$} (m-2-2)
    (m-3-2) edge node[anchor = south, sloped] {\footnotesize $R_Y$} (m-2-3)
    (m-2-3) edge (m-2-4);
\end{tikzpicture}
\caption{Slepian-Wolf with Helper. In the classical setting, $R_H = R_{X \to H} = R_{Y \to H} = 0$} \label{fig:classic_sw}
\end{figure}

\begin{theorem}[Binary functions]
Consider $\phi_X$ and $\phi_Y$ taking values in $\{0,1\}$ be functions of $X$ and $Y$, respectively. Let $Pr(\phi_X \neq \phi_Y) = \epsilon$. Then, there exist an efficient zero-error encoding and decoding of $X$ and $Y$ with a helper of rate $R_h = h(\epsilon)$, that operates at rates, for some $0 \leq \alpha \leq 1$:
\begin{align}
R_X  \geq H(X|\phi_X(X)), \quad &  R_Y \geq H(Y|\phi_Y(Y)) \nonumber, \\
R_X + R_Y \geq H(X|\phi_X(X)) + & H(Y|\phi_Y(Y)) +  \nonumber \\
 \alpha H(\phi_X(X)) & + (1-\alpha) H(\phi_Y(Y)) .\label{eq:rates_non_symm}
\end{align}
\end{theorem}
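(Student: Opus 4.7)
My plan is to give a constructive scheme that extends the G\'acs--K\"orner-style source decomposition by an omniscient helper that corrects the mismatch between $\phi_X(X)$ and $\phi_Y(Y)$. Each source is split via the bijection $X\leftrightarrow(\phi_X(X),X')$ and $Y\leftrightarrow(\phi_Y(Y),Y')$, where $X'$ (resp.\ $Y'$) indexes the position of $X$ (resp.\ $Y$) inside the level set of its binary function. The idea is to avoid paying for both $\phi_X(X^n)$ and $\phi_Y(Y^n)$: the helper plus the binary structure of the functions allow the decoder to reconstruct one from the other.

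I would establish the two corner points of the rate region and then time-share. At the corner $\alpha=1$, encoder $E_X$ transmits $X^n$ at rate $H(X)+\delta$ by any entropy code, while encoder $E_Y$ performs conditional source coding of $Y^n$ given $\phi_Y(Y^n)$ at rate $H(Y\mid\phi_Y(Y))+\delta$, using a codebook that the decoder can parse once it knows $\phi_Y(Y^n)$. The helper observes both sources and compresses the i.i.d.\ Bernoulli$(\epsilon)$ error pattern $E_i=\mathbb{I}\{\phi_X(X_i)\neq \phi_Y(Y_i)\}$ at rate $h(\epsilon)+\delta$. Decoding proceeds sequentially: recover $X^n$ and hence $\phi_X(X^n)$; recover $E^n$; set $\phi_Y(Y^n)=\phi_X(X^n)\oplus E^n$ using the binary structure; finally decode $Y^n$ using $E_Y$'s codebook indexed by $\phi_Y(Y^n)$. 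The symmetric corner $\alpha=0$ is obtained by swapping the roles of $X$ and $Y$, and a standard block-level time-sharing argument with weights $(\alpha,1-\alpha)$ yields the full sum-rate in \eqref{eq:rates_non_symm}.

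The step that requires care is the binary XOR reconstruction: it is precisely the fact that $\phi_X,\phi_Y$ are binary-valued that makes a single bit per symbol sufficient to flip one function value into the other, which is what keeps the helper rate at $h(\epsilon)$. A non-binary extension would demand a strictly larger helper rate, roughly the conditional entropy of $\phi_Y(Y)$ given $\phi_X(X)$. The remaining ingredients — entropy coding of $X^n$, conditional source coding of $Y^n$ given $\phi_Y(Y^n)$, and lossless compression of an i.i.d.\ Bernoulli$(\epsilon)$ stream — are standard AEP constructions; the zero-error qualifier can be preserved by reserving an asymptotically negligible prefix to handle the atypical sequences individually.
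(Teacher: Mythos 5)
Your proposal is correct and follows essentially the same route as the paper's own proof: fully encode one source, conditionally encode the other given its binary function value, let the helper losslessly compress the i.i.d.\ Bernoulli$(\epsilon)$ disagreement sequence at rate $h(\epsilon)$, recover the missing function value by the XOR relation at the decoder, and obtain the rest of the region by symmetry and time-sharing. The paper works from the symmetric corner point $(H(X|\phi_X(X)),H(Y))$ rather than your $(H(X),H(Y|\phi_Y(Y)))$, but this is immaterial, and your explicit handling of the zero-error qualifier and of why binarity keeps the helper rate at $h(\epsilon)$ is, if anything, more careful than the original.
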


\begin{proof}
We prove the corner points $(H(X|\phi_X(X)),H(Y))$. The rest of the region follows by symmetry and time-sharing.
Consider $n$ iid samples $(x^n,y^n)$, and the corresponding $(\phi_X(x^n),\phi_Y(y^n))$. Denote by $e^n$ the sequence such that $e_i = 1$ if $\phi_X(x_i) \neq \phi_Y(y_i)$ and 0 otherwise.
Then, we can encode $x^n$ given $\phi_Y(x^n)$ using approximately $nH(X|\phi_X(X))$ bits. In contrast, we encode $y^n$ fully using $nH(Y)$ bits, and let the helper encode $e^n$ using $nH(\epsilon)$ bits.
At the receiver, we can decode $x^n$, and $e^n$. From $x^n$ we can obtain $\phi_X(x^n)$ which along with $e^n$ determines uniquely $\phi_Y(y^n)$. Therefore we can obtain $y^n$ as well.
\end{proof}

It is interesting to note that in comparison with the usual Slepian-Wolf region, the rates in \eqref{eq:rates_non_symm} do not describe a \emph{symmetric} achievable region, that is the dominant face is not necessarily a 45 degrees line.
The above theorem can be extended to the setting in which the functions $\phi_X$ and $\phi_Y$ can take value in larger sets. The rate of the helper then depends on the position on the dominant face, as we need to encode which of the values $\phi_X$  and $\phi_Y$ to take if $\phi_X \neq \phi_Y$.


\begin{theorem}[Arbitrary functions]\label{thm:arbitrary_func}
Let $\phi_X : \mathcal{X} \to \{ 1,\ldots, n_x\}$ and $\phi_Y: \mathcal{Y} \to \{ 1,\ldots,n_y\}$. Let the rate of the helper be $H(\phi_X(X)|\phi(Y))$. The following corner point is achievable:
\begin{align}
R_X \geq H(X|\phi_X(X)) , \quad & R_Y \geq H(Y), \nonumber \\
R_h \geq H(\phi_X(X)&|\phi_Y(Y)).
\end{align}
\end{theorem}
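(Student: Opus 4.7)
The plan is to extend the binary-function proof by replacing the ``flag an error'' encoding of $e^n$ with a genuine Slepian--Wolf code executed by the helper. Since the helper is connected to both sources (see Fig.~\ref{fig:classic_sw}), it can directly compute both $\phi_X(X^n)$ and $\phi_Y(Y^n)$; on the other hand, the decoder will eventually recover $Y^n$ fully and hence knows $\phi_Y(Y^n)$ as decoder side information. The helper's task therefore reduces to lossless source coding with side information known only at the decoder, which is achievable at rate $H(\phi_X(X)\mid \phi_Y(Y))$.

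Concretely, I would set up $n$ i.i.d. draws $(X^n,Y^n)\sim p_{X,Y}^{\otimes n}$ and define $U^n\vcentcolon=\phi_X(X^n)$ and $V^n\vcentcolon=\phi_Y(Y^n)$. Encoder $E_Y$ applies a standard lossless compressor to $Y^n$ at a rate arbitrarily close to $H(Y)$. Encoder $E_X$ first computes $U^n$ from $X^n$, then uses a conditional compressor that sends $X^n$ given $U^n$ at rate close to $H(X\mid \phi_X(X))$. The helper $E_H$ computes $U^n$ and applies Slepian--Wolf random binning, sending the bin index of $U^n$ at rate close to $H(\phi_X(X)\mid \phi_Y(Y))$.

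Decoding proceeds sequentially. Step (i): recover $Y^n$ from the $E_Y$ message and form $V^n=\phi_Y(Y^n)$. Step (ii): combine $V^n$ with the helper's bin index and decode $U^n$ by jointly typical set decoding; this succeeds with vanishing error probability because the bin rate equals $H(\phi_X(X)\mid \phi_Y(Y))$, which is the standard Slepian--Wolf threshold for recovering $U$ from its bin when $V$ is available. Step (iii): combine $U^n$ with the $E_X$ message to recover $X^n$, again with vanishing error probability by the conditional AEP. A union bound shows the overall error goes to zero, and the achieved rate triple matches the claimed corner point.

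The main conceptual point — rather than obstacle — is justifying why the helper rate can be the \emph{conditional} entropy $H(\phi_X(X)\mid \phi_Y(Y))$ instead of the full $H(\phi_X(X))$. In the binary case one bit of disagreement sufficed, but for larger alphabets the decoder actually needs to learn $\phi_X(X^n)$ itself; this is affordable only because $Y^n$, and thus $\phi_Y(Y^n)$, is already available at the decoder after step (i). Once this observation is made, the rest is a routine invocation of the Slepian--Wolf achievability argument, and the time-sharing/symmetry extension to the full dominant face follows exactly as in the binary theorem.
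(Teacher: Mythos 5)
Your proof is correct as an achievability argument and follows the same sequential decoding structure (recover $Y^n$ first, then the common function, then $X^n$) as the paper's binary-function proof; the paper itself omits the proof of this theorem. However, there is one place where you reach for a heavier tool than the setting requires, and it matters for the spirit of the result. You correctly observe that the helper sees both sources and can compute both $U^n=\phi_X(X^n)$ and $V^n=\phi_Y(Y^n)$, but you then model its task as ``source coding with side information known only at the decoder'' and invoke Slepian--Wolf random binning. Since the helper knows $V^n$ as well, the side information is available at \emph{both} the helper-encoder and the decoder, so the helper can simply apply an explicit zero-error variable-length conditional code (e.g., arithmetic coding of $U^n$ with respect to $p_{U\mid V}$ symbol by symbol), achieving rate $H(\phi_X(X)\mid\phi_Y(Y))$ with no binning, no joint typicality decoding, and no vanishing-but-nonzero error probability. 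This is not a correctness gap for the stated ``achievable'' claim, but the entire point of the paper's construction is to avoid Slepian--Wolf-style binning in favor of efficient zero-error single-source codes (the companion binary-function theorem explicitly claims ``efficient zero-error'' coding), and your version sacrifices both properties unnecessarily. Replacing step (ii) with conditional entropy coding at the helper makes the whole scheme zero-error and practically implementable, which is what the paper's framework calls for; the time-sharing extension to the dominant face then goes through exactly as you describe.
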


This theorem can be extended to obtain a complete rate region, by first considering the symmetric corner point and then time-sharing. Note that in this context, the rate of the helper changes depending on the position in the dominant face. The proof is omitted.

\noindent \textbf{Example:} Let us revisit our previous example. This time however, we let an edge of weight $\frac{\delta}{8}$ join the two components. The common information of the resulting $p_{X,Y}$ is zero. However, we can find a simple decomposition of the sources by considering the same $\phi_X$ and $\phi_Y$ as before, and letting the helper encode the errors. More precisely, let $E_i = \mathbf{1}\{ X _i= x_2 \text{ and } Y_i = y_3\}$. Then, the helper can simply encode $E^n$, using approximately $nH(E) = nh(\frac{\delta}{8})$ bits.
\begin{minipage}{.45\textwidth}
\begin{minipage}[c]{.45\textwidth}
\centering
\begin{tikzpicture}[scale=1.2, auto,swap]
		\foreach \pos/ \name/ \num in {{(0,0)/x_4/4}, {(0,.5)/x_3/3}, {(0,1)/x_2/2}, {(0,1.5)/x_1/1}, {(2,0)/y_4/4},{(2,.5)/y_3/3},{(2,1)/y_2/2},{(2,1.5)/y_1/1}}
        \node (\name) at \pos {$\num$};
        \node (cal_Y) at (2.3,1.7) {$\mathcal{Y}$};
        \node (cal_X) at (-0.3,1.7) {$\mathcal{X}$};
        \foreach  \source/ \dest /\edgename  in {x_1/y_1/1, x_2/y_2/2, x_1/y_2/3, x_2/y_1/4, x_3/y_3/5, x_4/y_4/6, x_3/y_4/7, x_4/y_3/8}
        \path[undirect_edge] (\source) -- (\dest);
        \path (x_2) edge[red] (y_3);
        \end{tikzpicture}
        \end{minipage} 
\begin{minipage}[c]{.45\textwidth}
\centering
\begin{align*}
\bP = \frac{1}{8} \left[ \begin{array}{cccc}
1 & 1 & 0 & 0 \\
1 & 1 -\delta & \delta & 0 \\
0 & 0 & 1 & 1 \\
0 & 0 & 1 & 1
\end{array}\right]
\end{align*}
\end{minipage}
\end{minipage}

It is interesting to note that in the previous example, the helper node does not need to know the value of $X$ and $Y$ if they are not part of the cut. Precisely, the source $X$ only need to send whether the value is $X = 2$. Similarly, at $Y$, we only need to describe whether $Y = 3$. Indeed, this information is enough for the helper to compute whether there was an error. In fact, this observation is not specific to this example, and the helper need not to be fully omniscient. Precisely, let $\mathcal{S}_X = \{ i \in [n_X] | \phi_X(i) \ne \phi_Y(j) \textit{ and } P(i,j) > 0 \textit{ for some } j \}$ be the subset of $[n_X]$ for which may cause confusion. We then define the random variable $X_{\text{cut}}$ as the random variable $X$ restricted to that set $\mathcal{S}_X$, that is $X_\text{cut} = \mathbb{I}\{ X \in \mathcal{S}_X\} X$.
 Similarly, we can define the set $\mathcal{S}_Y$ and the random variable $Y_{\text{cut}}$.
 
\begin{theorem}\label{thm:limited_rate}
Let $\phi_X$ and $\phi_Y$ be functions on $[n_X]$ and $[n_Y]$ respectively. Then, the helper performs as good as an omniscient helper as long as:
\begin{align}
R_{X \to H} \geq H(X_\text{cut}), \quad R_{Y \to H} \geq H(Y_\text{cut}),
\end{align}
where $R_{X \to H}$ and $R_{Y \to H}$ are the rates from sources $X$ and $Y$ respectively to the helper.
\end{theorem}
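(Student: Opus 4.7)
The plan is to show that the pair $(X_\text{cut}, Y_\text{cut})$ is a sufficient statistic for the helper to emulate the behavior of the omniscient helper in Theorem~\ref{thm:arbitrary_func}. The argument has two ingredients: routing compressed versions of the sources to the helper, and then verifying that the helper loses nothing by receiving only these compressed versions. The rate bounds $R_{X \to H} \geq H(X_\text{cut})$ and $R_{Y \to H} \geq H(Y_\text{cut})$ are handled by a straightforward application of standard point-to-point source coding at each source independently: source $X$ compresses $X_\text{cut}^n$ to rate arbitrarily close to $H(X_\text{cut})$, and similarly for $Y_\text{cut}^n$ at source $Y$. No joint coding across the two sources is needed for this part.

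The substantive step is to show that the helper, given only $(X_\text{cut}^n, Y_\text{cut}^n)$, can reconstruct whatever message an omniscient helper would send. The key structural observation is a case analysis driven directly by the definition of $\mathcal{S}_X$ and $\mathcal{S}_Y$: if $X \notin \mathcal{S}_X$, then for every $Y'$ with $p_{X,Y}(X, Y') > 0$ we have $\phi_X(X) = \phi_Y(Y')$, and in particular $\phi_X(X) = \phi_Y(Y)$ with probability one; the symmetric statement holds for $\mathcal{S}_Y$. Consequently, a disagreement $\phi_X(X) \ne \phi_Y(Y)$ can occur only when $X \in \mathcal{S}_X$ \emph{and} $Y \in \mathcal{S}_Y$, and in that case $X_\text{cut} = X$ and $Y_\text{cut} = Y$ so the helper possesses the full values needed to compute $\phi_X(X)$ and $\phi_Y(Y)$. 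In all remaining cases, $\phi_X(X) = \phi_Y(Y)$, and since the decoder receives $Y$ in the clear (per the coding scheme of Theorem~\ref{thm:arbitrary_func}), it can simply substitute $\phi_Y(Y)$ for $\phi_X(X)$ without any intervention from the helper.

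Putting this together, the helper can simulate the omniscient helper's transmission at rate $H(\phi_X(X)\mid\phi_Y(Y))$ (for instance via Slepian--Wolf binning of $\phi_X(X)^n$ with $\phi_Y(Y)^n$ as decoder side information), matching the achievable region of Theorem~\ref{thm:arbitrary_func}. The main obstacle is the case analysis above: one has to argue carefully that the helper's ignorance of $\phi_X(X)$ in the "both outside the cut" regime is benign, because in precisely that regime $\phi_X(X)$ is forced to equal the decoder-known quantity $\phi_Y(Y)$. Once this point is pinned down, the rest of the proof is a routine assembly of single-source compression from the edges $s_X \to H$ and $s_Y \to H$ together with the already-established helper-to-decoder coding from Theorem~\ref{thm:arbitrary_func}.
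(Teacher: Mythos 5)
Your proposal is correct and takes essentially the same route as the paper's own (much terser) proof: send $X_\text{cut}$ and $Y_\text{cut}$ to the helper, and observe that any disagreement $\phi_X(X)\ne\phi_Y(Y)$ forces both realized symbols into their cut sets, where the helper sees them in full, while outside that regime $\phi_X(X)=\phi_Y(Y)$ is known to the decoder via $Y$. Your case analysis is in fact stated more carefully than the paper's, which asserts the slightly imprecise implication that $X\in\mathcal{S}_X$ alone forces $Y\in\mathcal{S}_Y$.
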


\begin{proof}
Let the sources encode the sources $X_\text{cut}$ and $Y_\text{cut}$, and send them to the helper. Suppose $X \notin \mathcal{S}_X$, then by construction it must be that $\phi_X(X) = \phi_Y(Y)$, and therefore the helper can operate. If $X \in \mathcal{S}_X$, it must be that $Y \in \mathcal{S}_Y$ too, and therefore the helper can detect the errors.
\end{proof}

\noindent \textbf{Example:} 
Consider as an example the probability distribution represented below on the right. We let $\phi_X(i) = 1$ if $i = 1,2$, and $\phi_X(3) = -1$. On the other hand, we let $\phi_Y(j) = j$. The resulting rate region $\mathcal{R}_\phi$ derived from \eqref{eq:rates_non_symm} for this choice of functions is represented on the left, along with the Slepian-Wolf rate region $\mathcal{R}_{SW}$, and the G\'acs-K\"orner region $\mathcal{R}_{GK}$. In this case, the rate region $\mathcal{R}_\phi$ is larger than the Slepian-Wolf region, but it has to be emphasized that the helper has non-zero rate here! In fact, the rate of the helper is precisely $h(\frac{1}{2}\epsilon)$. Also note that the region $\mathcal{R}_\phi$ does not have a 45 degree slope dominant face.
\begin{minipage}{.49\textwidth}
\begin{minipage}[l]{.48\textwidth}
\centering
\includegraphics[scale=.35]{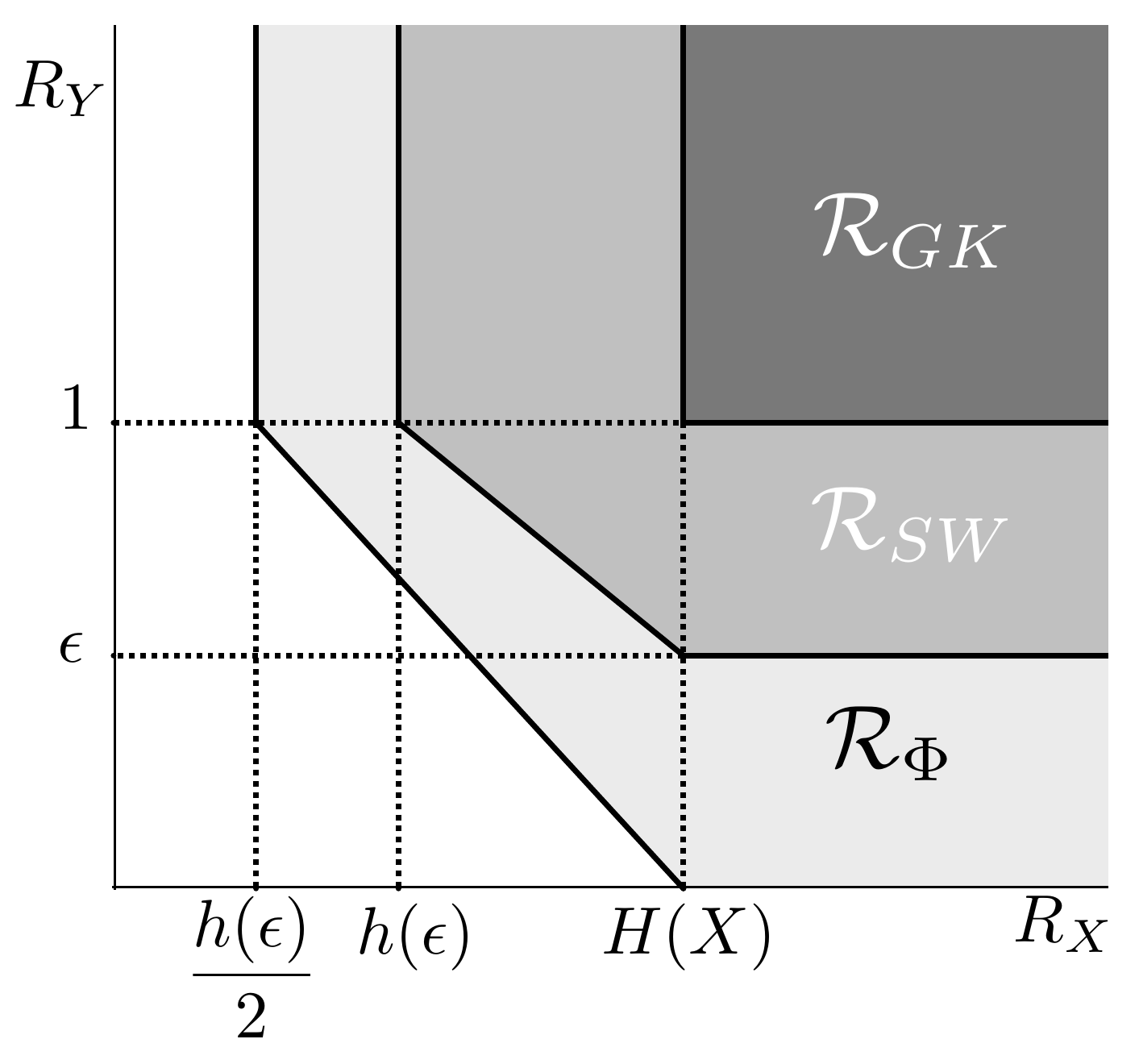}
\end{minipage} 
\begin{minipage}[c]{.45\textwidth}
\centering
\begin{tikzpicture}[scale=1, auto,swap]
		\foreach \pos/ \name/ \num in {{(0,0)/x_3/3}, {(0,1)/x_2/2}, {(0,2)/x_1/1}, {(1.5,0.66)/y_2/2},{(1.5,1.32)/y_1/1}}
        \node (\name) at \pos {$\num$};
        \node (cal_Y) at (1.7,1.8) {$\mathcal{Y}$};
        \node (cal_X) at (-0.2,2.4) {$\mathcal{X}$};
        \path[undirect_edge] (x_1) edge node[anchor = south, sloped] {\footnotesize $\frac{1}{2}(1 - \epsilon)$} (y_1);
        \path[undirect_edge] (x_3) edge node[anchor = north, sloped] {\footnotesize $\frac{1}{2}(1 - \epsilon)$} (y_2);
        \path[undirect_edge] (x_2) edge node[anchor = south, sloped] {\footnotesize $\frac{1}{2}\epsilon$} (y_1);
        \path[undirect_edge] (x_2) edge node[anchor = north, sloped] {\footnotesize $\frac{1}{2}\epsilon$} (y_2);
        \end{tikzpicture}
\end{minipage}
\end{minipage}

\begin{remark}
Denote by $E$ the output of the helper node. Then, $(I(Y;E|X),I(X;E|Y),I(X;Y|E)) \in \mathcal{I}(X;Y)$, where $\mathcal{I}(X;Y)$ is the region of tension defined in \cite{prabhakaran2014}. This region of tension is related to a notion of assisted common information, which is somewhat of a dual problem. Indeed, in the assisted common information problem, a helper node communicates with the sources directly, and helps them in agreeing on a common function. In our setting, the helper sends information to the terminal, and not to the sources.
\end{remark}

\subsection{Network Setting}

This type of approach to distributed coding generalizes nicely to the network setting, in particular correlated sources multicast. In this setting, the helper node is a front-end node who has large rate from the original sources, but limited rate towards the terminals. More precisely, consider a network $G$ with two sources $s_X$ and $s_Y$, and a terminal $t \in \mathcal{T}$. In addition, we consider a helper node $s_H$, which we assume is different than $s_X$ and $s_Y$ to avoid trivial cases.\\
To efficiently transmit X and Y to the terminals we follow these steps. First, we transmit X and Y to the front-end helper $s_H$. Then, we transmit $X$ and $Y$ to the terminal using source decomposition, with the help of $s_H$. Note that we exploit the fact that $\phi_X \approx \phi_Y$ to send the function once, and express $X$ and $Y$ with respect to the $\phi_X$ and $\phi_Y$ respectively. The resulting sufficient conditions are expressed below, where we only consider a corner point coding for notation purposes.

\begin{theorem}
Let $\phi_X$ and $\phi_Y$ be functions. Sources $X$ and $Y$ generated at nodes $s_X$ and $s_Y$ can be reliably and efficiently communicated through the network, if the following min-cut relations are satisfied:
\begin{align}
&\rho(s_X;s_H) \geq H(X), \quad  \rho(s_Y;s_H) \geq H(Y) \nonumber \\
&\rho(s_X,s_Y;s_H) \geq H(X)  + H(Y) \nonumber \\
&\rho(s_X;t) \geq H(X|\phi_X(X)) , \quad  \rho(s_Y;t) \geq H(Y) & \nonumber \\
&\rho(s_H;t) \geq H(\phi_X(X)|\phi_Y(Y)) \nonumber \\
&\rho(s_X,s_Y,s_H;t)  \geq H(X|\phi_X(X)) + H(Y|\phi_Y(Y)) + \nonumber \\ &\quad \qquad H(\phi_X(X),\phi_Y(Y))
\end{align}
\end{theorem}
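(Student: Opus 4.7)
The plan is to prove achievability by decomposing the network operation into two cascaded stages of multi-source multicast joined at the helper node, each solved by standard network coding results.

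First, in Stage~1 I would route the uncompressed sources $X$ and $Y$ into $s_H$. Treating the two as independent messages at rates $H(X)$ and $H(Y)$ (deliberately ignoring their correlation in this stage), the task reduces to a two-source multicast to a single sink, whose achievability is guaranteed exactly by the individual cut conditions $\rho(s_X;s_H)\geq H(X)$ and $\rho(s_Y;s_H)\geq H(Y)$ together with the joint cut $\rho(s_X,s_Y;s_H)\geq H(X)+H(Y)$, via the multi-source multicast network coding theorem.

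Next, in Stage~2 the helper acts as a third source. Having recovered $n$ i.i.d.\ realizations of $(X,Y)$, $s_H$ computes $\phi_X(X^n)$ and $\phi_Y(Y^n)$ and produces a Slepian--Wolf encoding of $\phi_X(X^n)$ with side information $\phi_Y(Y^n)$ at rate $H(\phi_X(X)\mid\phi_Y(Y))$ --- the side information will be available at the terminal because $Y^n$ is delivered in full. In parallel, $s_X$ emits a Slepian--Wolf encoding of $X^n$ with side information $\phi_X(X^n)$ at rate $H(X\mid\phi_X(X))$, and $s_Y$ transmits $Y^n$ directly at rate $H(Y)$. These three streams are then multicast to $t$ by network coding. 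Decoding at the terminal is sequential: recover $Y^n$, hence $\phi_Y(Y^n)$ as a deterministic function; combine with the helper stream to recover $\phi_X(X^n)$; combine with the $s_X$ stream to recover $X^n$. Invoking multi-source multicast achievability a second time, the remaining cut bounds in the statement are sufficient; the identity $H(Y)+H(\phi_X(X)\mid\phi_Y(Y))=H(Y\mid\phi_Y(Y))+H(\phi_X(X),\phi_Y(Y))$ reconciles the stated joint-cut bound with the sum of the three individual stream rates.

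The main obstacle I anticipate is twofold. First, the two stages must share the edges of $\mathcal{G}$, and the cleanest way to handle this is a time-sharing argument in the limit of large blocklength so that the Stage~1 overhead is amortized to zero. Second, the multicast achievability theorem invoked in Stage~2 ordinarily requires min-cut bounds over every nonempty subset of the three source nodes, whereas the statement only lists the three singletons and the full set; the intermediate bounds such as $\rho(s_X,s_Y;t)$ and $\rho(s_X,s_H;t)$ must be shown to be implied by the listed ones together with monotonicity of min-cuts and the effective independence of the three coded streams after decomposition. Formalizing this implication carefully is the part I expect to require the most attention in a complete write-up.
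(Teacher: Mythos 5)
Your two-stage plan is exactly the argument the paper has in mind: the paper's (omitted) proof likewise routes $X$ and $Y$ to $s_H$ as independent messages using the first three cut conditions, and then multicasts the three decomposed, effectively independent streams $X\mid\phi_X(X)$, $Y$, and $\phi_X(X)\mid\phi_Y(Y)$ from $s_X$, $s_Y$, $s_H$ to $t$ via the min-cut conditions for independent source multicast, with the same entropy identity reconciling the joint-cut bound. The obstacles you flag (edge sharing between stages and the unlisted intermediate subset cuts) are real loose ends, but they are loose ends in the paper's own omitted proof as well, so your proposal is, if anything, more careful than the source.
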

\begin{proof}
The complete proof is omitted, but follows from the min-cut conditions for independent source coding. Note that the first three conditions characterize sufficient min-cut conditions from the source to the helper, while the last conditions characterize sufficient conditions from the sources and helper to the terminals. We refer the readers to \cite{technical} in which similar results are established.
\end{proof}

\begin{remark}
When the rate of the helper is 0, this reduces to the results of \cite{technical}, in which the common information $K$ between $X$ and $Y$ is sent only once to the terminal, and $X$ and $Y$ are described given that common information.
\end{remark}
\begin{remark}
Note that the rates to the helper $H(X)$ and $H(Y)$ can be reduced by using Theorem~\ref{thm:limited_rate}.
\end{remark}

\section{Optimal Decompositions}

So far, we have considered arbitrary decompositions of the the bipartite graph  of $p_{X,Y}$. In this section we aim at developing a notion of optimal decomposition. Note that there is an inherent trade-off in the construction of decomposition, as on one hand we wish to increase the entropy of the function $\phi_X(X)$, and on the other hand we want to make it \emph{agree} with $\phi_Y(Y)$ as often as possible. This trade-off is characterized by the reduction in rates from the sources, and the necessary rates from the helper, and is specified by the definition below:

\begin{definition}[Maximum Entropy Function]
We say a decomposition $X \rightarrow (X',\phi^*_X(X))$ and $Y \rightarrow (Y',\phi_Y^*(Y))$ is optimal for helper rate $\epsilon$ if it is the solution of the optimization:
\begin{align}
\text{maximize } & H(\phi_X(X)) \nonumber \\
\text{subject to } & H(\phi_X(X)|\phi_Y(Y)) \leq \epsilon \label{eq:optimal_dec}
\end{align}
where the optimization is taken over all functions $\phi_X$ and $\phi_Y$ taking values in a finite set.
\end{definition}

Note that the previous definition indeed captures the rates in Theorem~\ref{thm:arbitrary_func}, as maximizing $H(\phi_X(X))$ equivalently minimizes $H(X|\phi_X(X))$ by observing that $H(X) = H(\phi_X(X)) + H(X|\phi_X(X))$.

\begin{remark}
In general, the problem in \eqref{eq:optimal_dec} is not equivalent to the symmetric problem of maximizing $H(\phi_Y(Y))$ with constraint $H(\phi_Y(Y)|\phi_X(X)) \leq \epsilon$. However, for symmetric matrices $\bP$, \emph{i.e.} $\bP = \bP^T$, the two problems are equivalent. It then follows that the rate region under optimal decomposition \eqref{eq:rates_non_symm} will be symmetric for symmetric joint distributions. 
\end{remark}

An alternative formulation of the optimization in \eqref{eq:optimal_dec} is given by the Lagrangian relaxation:

\begin{definition}[Lagrangian Formulation]
The lagrangian formulation of \eqref{eq:optimal_dec} is, for some $\lambda > 0$:
\begin{align}
\text{maximize}_{\phi_X,\phi_Y} H(\phi_X(X)) - \lambda H(\phi_X(X) | \phi_Y(Y)) \label{eq:lagrangian}
\end{align}
\end{definition}

The formulation above is useful as it is an unconstrained optimization problem. Using this formulation, we first show that for large enough $\lambda$, the solution for this optimization problem is the G-K common information, and the helper is not required. We then focus on binary functions, and show a useful matrix representation as well. This allows us to harness spectral graph theory and suggest an approximation to the problem.

\begin{lemma}
There exist a $\lambda_{\text{max}}$ such that for any $\lambda>\lambda_{\text{max}}$, the solution of \eqref{eq:lagrangian} is the solution of \eqref{eq:gk_function_formulation}, \emph{i.e.} the optimal functions are the index of the disjoint components.
\end{lemma}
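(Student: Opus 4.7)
My plan is to reduce the optimization over all finite-range functions $(\phi_X, \phi_Y)$ to a finite search over partition pairs of $[n_X]$ and $[n_Y]$, and then to separately analyze the pairs that satisfy $H(\phi_X(X)\mid\phi_Y(Y)) = 0$ versus those for which this quantity is strictly positive.

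First I would observe that the objective in \eqref{eq:lagrangian} depends only on the partitions $P_X, P_Y$ of $[n_X], [n_Y]$ induced by $\phi_X, \phi_Y$: both $H(\phi_X(X))$ and $H(\phi_X(X)\mid\phi_Y(Y))$ are invariant under bijective relabeling of block names, and restricting to labels actually realized on the support of $p_{X,Y}$ leaves the two terms unchanged. Hence we may assume $\phi_X, \phi_Y$ are bijective labelings of their induced partitions, so the feasible set becomes the finite collection of pairs of partitions of $[n_X]$ and $[n_Y]$. This finiteness is the key structural observation that makes everything else go through.

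Next, I would split this finite space into $\mathcal{F}_0 := \{(P_X, P_Y) : H(\phi_X(X)\mid\phi_Y(Y)) = 0\}$ and its complement $\mathcal{F}_+$. On $\mathcal{F}_0$ the Lagrangian reduces to $H(\phi_X(X))$, and any element of $\mathcal{F}_0$ forces $\phi_X(X) = g(\phi_Y(Y))$ for some deterministic $g$; hence $\phi_X(X)$ is simultaneously a function of $X$ and of $Y$, so by the definition of G\'acs-K\"orner common information $H(\phi_X(X)) \leq H(K_{X,Y})$, with equality attained when $\phi_X = \phi_Y$ equals the connected-component index. On $\mathcal{F}_+$, set $\delta := \min_{(P_X, P_Y) \in \mathcal{F}_+} H(\phi_X(X)\mid\phi_Y(Y))$; by the finiteness established in the previous step, $\delta > 0$, so any pair in $\mathcal{F}_+$ gives an objective at most $\log n_X - \lambda \delta$.

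Taking $\lambda_{\max} := \max\{0, (\log n_X - H(K_{X,Y}))/\delta\}$, any $\lambda > \lambda_{\max}$ drives the objective on $\mathcal{F}_+$ strictly below $H(K_{X,Y})$, which is attained on $\mathcal{F}_0$ by the connected-component partition. Therefore the optimizer of \eqref{eq:lagrangian} must lie in $\mathcal{F}_0$ and achieve $H(K_{X,Y})$, i.e., must coincide with the G-K decomposition. The main obstacle is cleanly carrying out the reduction to a finite partition search in the first step: one has to check that neither $H(\phi_X(X))$ nor $H(\phi_X(X)\mid\phi_Y(Y))$ can be shifted by clever relabeling or by introducing ghost values unused on the support of $p_{X,Y}$, and in particular that the infimum of positive conditional entropies over $\mathcal{F}_+$ is actually attained. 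Once this reduction is in place, the rest is a direct comparison between the two regimes.
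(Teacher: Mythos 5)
Your argument is correct, and since the paper states this lemma without proof, your write-up supplies exactly the standard ``exact penalty'' argument one would expect here: the objective depends only on the induced partition pair, the partition space is finite, so the minimum of $H(\phi_X(X)\mid\phi_Y(Y))$ over the strictly-positive regime is some $\delta>0$, and $\lambda > (\log n_X - H(K_{X,Y}))/\delta$ forces the optimizer into the zero-conditional-entropy class, where the G\'acs--K\"orner variational characterization caps the value at $H(K_{X,Y})$. The one step worth making explicit is the bridge between the two constraint forms: $H(\phi_X(X)\mid\phi_Y(Y))=0$ only gives $\phi_X(X)=g(\phi_Y(Y))$ a.s., not $\phi_X(X)=\phi_Y(Y)$ a.s.\ as in \eqref{eq:gk_function_formulation}, but replacing $\phi_Y$ by $g\circ\phi_Y$ shows the two feasible sets yield the same optimal value of $H(\phi_X(X))$ --- you handle this correctly via the observation that $\phi_X(X)$ is then a common function of both $X$ and $Y$.
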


This Lagrangian relaxation lends itself to a simple matrix notation when $\phi_X$ and $\phi_Y$ take values in $\{-1,1 \}$. Recall that in that case, the necessary rate of the helper only encodes the error events, so it suffice to characterize the probability of error. We have:
\begin{align}
\PP (\phi_X(X) \neq \phi_Y(Y)) &= \sum_{i,j} p_{X,Y}(i,j) \frac{1}{2}(1 - \phi_X(i)\phi_Y(j)) \nonumber \\
& = \frac{1}{2}(1 - \phi_X^T P \phi_Y)
\end{align}
where $\phi_X$ and $\phi_Y$ are column vectors of length $n_X$ and $n_Y$ respectively, such that $(\phi_a)_{i} = \phi_a(i)$, for $a \in \{X,Y\}$ and $i \in [n_a]$. Similarly, the probability $\PP (\phi_X(X) = 1)$ can be written as $\frac{1}{2}(1 + \phi_X^T \bP \mathbf{1})$.

\begin{corollary}[Matrix notation] \label{cor:matrix_notation}
Let $\phi_X$ and $\phi_Y$ take values in $\{ -1,1\}$. Then, we can write \eqref{eq:lagrangian} as :
\begin{align}
\max_{\mathbf{\phi}_x,\mathbf{\phi_y}} h\left(\frac{1}{2}(1 + \bm{\phi_X}^T \bP \mathbf{1}) \right) - \lambda h\left(\frac{1}{2}(1 - \bm{\phi_X}^T \bP \bm{\phi_Y})\right) \label{eq:matrix_not}
\end{align}
where the maximization is over vectors $\phi_X$ and $\phi_Y$ of size $n_x$ and $n_y$ respectively, where each coordinate is in $\{ -1,1\}$.
\end{corollary}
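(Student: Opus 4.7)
The proof will be a direct substitution that reduces each entropic term in the Lagrangian \eqref{eq:lagrangian} to a scalar function of a quadratic form in $\bm{\phi}_X, \bm{\phi}_Y$, exploiting the sign-valued structure of the functions.

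First, to rewrite $H(\phi_X(X))$: since $\phi_X(X) \in \{-1,1\}$, its entropy equals $h(p)$ with $p = \PP(\phi_X(X) = 1)$. I will use the identity $\mathbb{I}\{\phi_X(i) = 1\} = \tfrac{1}{2}(1 + \phi_X(i))$ (valid because $\phi_X(i) \in \{-1,1\}$) together with the marginal relation $p_X(i) = \sum_j p_{X,Y}(i,j) = (\bP\mathbf{1})_i$. Summing against $p_X$ then collapses the expression to
\begin{align*}
p \;=\; \frac{1}{2}\Bigl(1 + \sum_i p_X(i)\phi_X(i)\Bigr) \;=\; \frac{1}{2}\bigl(1 + \bm{\phi}_X^T \bP \mathbf{1}\bigr),
\end{align*}
so that $H(\phi_X(X)) = h\bigl(\tfrac{1}{2}(1 + \bm{\phi}_X^T \bP \mathbf{1})\bigr)$.

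For the second term, the analogous identity $\mathbb{I}\{a \neq b\} = \tfrac{1}{2}(1 - ab)$ for $a,b \in \{-1,1\}$ yields $\PP(\phi_X(X) \neq \phi_Y(Y)) = \tfrac{1}{2}(1 - \bm{\phi}_X^T \bP \bm{\phi}_Y)$, which is already derived in the display preceding the corollary. The remaining point is to connect this to the term $\lambda H(\phi_X(X)|\phi_Y(Y))$ in \eqref{eq:lagrangian}: by the Binary Functions theorem, in the $\{-1,1\}$ setting the helper rate collapses to $h(\epsilon)$ with $\epsilon = \PP(\phi_X \neq \phi_Y)$, since given $\phi_Y(Y)$ and the Bernoulli$(\epsilon)$ error indicator $E = \mathbb{I}\{\phi_X \neq \phi_Y\}$ the receiver reconstructs $\phi_X(X)$ exactly. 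Plugging both expressions into the Lagrangian produces \eqref{eq:matrix_not}.

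I expect no real technical obstacle: the computation is elementary linear algebra. The only nuance worth pointing out is the interpretational step of replacing the information-theoretic penalty $H(\phi_X(X)|\phi_Y(Y))$ in \eqref{eq:lagrangian} by the operational helper rate $h(\epsilon)$, which is exactly the binary helper-rate identified earlier in the paper and which motivates the matrix formulation used in the sequel.
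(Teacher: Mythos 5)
Your proposal is correct and follows essentially the same route as the paper: the two identities $\PP(\phi_X(X)=1)=\tfrac{1}{2}(1+\bm{\phi}_X^T\bP\mathbf{1})$ and $\PP(\phi_X(X)\neq\phi_Y(Y))=\tfrac{1}{2}(1-\bm{\phi}_X^T\bP\bm{\phi}_Y)$ are exactly the computations the paper carries out in the display preceding the corollary, and the substitution into the Lagrangian is then immediate. You also correctly isolate the only genuinely delicate point, which the paper handles with the remark that ``the necessary rate of the helper only encodes the error events'': the replacement of $H(\phi_X(X)|\phi_Y(Y))$ by $h(\PP(\phi_X\neq\phi_Y))$ is an operational (Fano-type, $H(\phi_X|\phi_Y)=H(E|\phi_Y)\leq h(\epsilon)$) identification rather than an equality of entropies, so flagging it as an interpretational step is exactly right.
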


This matrix notation allows for some interesting insights. First, suppose there is a solution that has probability of error $P_e \vcentcolon= \frac{1}{2}( 1 - \phi_X^T \bP \phi_Y)$. Then, by considering $\tilde{\phi}_Y = - \phi_Y$, we obtain a solution that has probability of error $1 - P_e$. As the binary entropy function is symmetric, these two probability of errors lead to an identical rate from the helper, and therefore, there is no loss in only considering solutions such that the probability of error is less than $\frac{1}{2}$. On the other hand, observe that $h\left( \frac{1}{2}(1 + \phi_X^T \bP \mathbf{1}) \right)$ takes its maximum when $\phi_X^T \bP \mathbf{1}$ is close to 0. Therefore, there is once again no loss in considering $\phi_X$ such that $\PP(\phi_X(X)=1) \leq \frac{1}{2}$, as $\tilde{\phi}_X(X) = -\phi_X(X)$ achieves $\PP(\tilde{\phi}_X(X) = 1) = 1 - \PP(\phi_X(X)=1)$. These two observations along with the monotonicity of $h(\cdot)$ allow to simplify the search space, and lead to the following result:

\begin{corollary}\label{cor:conductance}
There exist a $\lambda>0$ such that the solution of \eqref{eq:matrix_not} is also the solution of:
\begin{align*}
\text{max}_{\phi_X,\phi_Y}\frac{\PP(\phi_X(X) = 1)}{\PP(\phi_X(X) \neq \phi_Y(Y))}, \,
\text{s.t.} \quad \PP(\phi_X(X)=1) \leq \frac{1}{2}
\end{align*}
\end{corollary}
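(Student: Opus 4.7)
The plan is to combine the three structural observations established just before the corollary into a parametric Dinkelbach-style reduction. First, the sign-flip observations (i) $\tilde\phi_Y = -\phi_Y$ and (ii) $\tilde\phi_X = -\phi_X$ allow me to restrict the search in \eqref{eq:matrix_not} to the finite set $\mathcal{F}$ of pairs $(\phi_X,\phi_Y)$ with $p := \PP(\phi_X(X)=1) \leq \tfrac{1}{2}$ and $q := \PP(\phi_X(X)\neq\phi_Y(Y)) \leq \tfrac{1}{2}$, without loss of optimality: each flip either leaves the pair of entropy arguments unchanged or reflects one of them into $[0,\tfrac{1}{2}]$, while preserving the objective. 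Observation (iii), that $h$ is strictly increasing on $[0,\tfrac{1}{2}]$, then renders both objectives $h(p)-\lambda h(q)$ and $p/q$ coordinate-wise monotone on $\mathcal{F}$: increasing in $p$ and decreasing in $q$.

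With both problems reduced to a common finite domain $\mathcal{F}$, I would apply a Pareto/parametric argument. Let $(\phi_X^\star,\phi_Y^\star)$ be a ratio maximizer with coordinate values $(p^\star,q^\star)$. No other $(p,q)\in\mathcal{F}$ can simultaneously satisfy $p \geq p^\star$ and $q \leq q^\star$ with at least one strict inequality, otherwise its ratio would exceed $p^\star/q^\star$. So every competitor $(p,q)\in\mathcal{F}\setminus\{(p^\star,q^\star)\}$ falls into one of two cases: (a) $p < p^\star$ and $q < q^\star$, which imposes the lower bound $\lambda \geq [h(p^\star)-h(p)]/[h(q^\star)-h(q)]$ for the Lagrangian to prefer the ratio maximizer, or (b) $p > p^\star$ and $q > q^\star$, which imposes the upper bound $\lambda \leq [h(p)-h(p^\star)]/[h(q)-h(q^\star)]$. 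I would set $\lambda^\star$ equal to the maximum of the case-(a) lower bounds and show that it does not exceed the minimum of the case-(b) upper bounds.

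The main obstacle is this last nonemptiness check of the $\lambda$-window. I expect to resolve it by combining the strict ratio inequalities $p^\star q \geq p q^\star$ (valid for all competitors) with the monotonicity and concavity of $h$ on $[0,\tfrac{1}{2}]$: concavity controls the chord slopes $[h(p^\star)-h(p)]/[p^\star-p]$ and $[h(q^\star)-h(q)]/[q^\star-q]$ across different competitors, and the ratio inequality forces these slopes into a consistent ordering that makes the case-(a) bounds uniformly below the case-(b) bounds. Once $\lambda^\star$ is obtained, substituting it into \eqref{eq:matrix_not} makes $(\phi_X^\star,\phi_Y^\star)$ a maximizer of the Lagrangian, which is precisely the conclusion of the corollary. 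If the concavity argument fails to close the window in a corner case, a fallback is to take $\lambda$ large enough so that the Lagrangian lexicographically prioritizes minimizing $h(q)$ subject to $H(\phi_X(X))>0$, and then invoke the preceding Lemma to interpolate between that regime and the one identified by $\lambda^\star$.
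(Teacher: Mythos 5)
The paper itself only gestures at a proof (``linearization plus a Dinkelbach decomposition''), so your attempt to give an explicit finite supporting-line argument over the achievable pairs $(p,q)=(\PP(\phi_X(X)=1),\PP(\phi_X(X)\neq\phi_Y(Y)))$ is a legitimate and in fact more concrete route; the sign-flip reduction to $p,q\le \tfrac12$ is exactly the paper's own preamble and is fine (note that flipping $\phi_X$ sends $(p,q)\mapsto(1-p,1-q)$, so you need both flips in sequence, but the conclusion holds). However, two things need fixing. First, your inequality directions are reversed: the requirement $h(p^\star)-h(p)\ge\lambda\,(h(q^\star)-h(q))$ shows that a competitor with $p<p^\star,\ q<q^\star$ (cheaper but less valuable) imposes the \emph{upper} bound $\lambda\le[h(p^\star)-h(p)]/[h(q^\star)-h(q)]$, while a competitor with $p>p^\star,\ q>q^\star$ imposes the \emph{lower} bound $\lambda\ge[h(p)-h(p^\star)]/[h(q)-h(q^\star)]$; also the third quadrant $p\le p^\star,\ q\ge q^\star$ is not excluded by ratio-optimality, it is simply harmless for every $\lambda\ge 0$. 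These are repairable bookkeeping errors.

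The genuine gap is the nonemptiness of the $\lambda$-window, which you correctly flag as the main obstacle but do not actually prove; ``concavity controls the chord slopes'' is not yet an argument, and the fallback via ``interpolating'' with the large-$\lambda$ Lemma does not produce a single $\lambda$ witnessing the claim. The step can be closed, but it takes a real computation. Writing $r^\star=p^\star/q^\star$, ratio-optimality gives $p\le r^\star q$ for every competitor, and monotonicity of $h$ on $[0,\tfrac12]$ lets you push every case-(a) bound down to, and every case-(b) bound up to, the corresponding chord slope of the single curve $q\mapsto \bigl(h(q),\,h(\min(r^\star q,\tfrac12))\bigr)$, on which $(q^\star,p^\star)$ lies exactly. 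Nonemptiness of the window is then equivalent to concavity of $y\mapsto h(r^\star h^{-1}(y))$, which reduces to showing that $t\mapsto(1-t)\ln\frac{1-t}{t}$ is decreasing on $(0,\tfrac12]$ --- true, but this is the actual content of the proof and is absent from your write-up. You should also dispose separately of the degenerate case $q^\star=0$ (nonzero G\'acs--K\"orner common information, infinite ratio), which the preceding Lemma on large $\lambda$ covers, and say what happens under ties if you want the solution \emph{sets} of the two problems to coincide rather than merely intersect.
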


\begin{proof}
The complete proof is omitted but follows essentially from a linearization, and a Dinkelbach decomposition \cite{dinkelbach1967}, which allows to transform the optimization problem \eqref{cor:matrix_notation} into an equivalent fractional optimization (for some $\lambda$).
\end{proof}

\noindent The optimization in Corollary~\ref{cor:conductance} is similar to a graph conductance formulation. Indeed, the term $\PP(\phi_X(X) \neq \phi_Y(Y))$ can be interpreted as the value of a cut on the bipartite graph for the partition $\mathcal{S} = \{(X,Y)| \phi_X = \phi_Y = 1\}$, while the value $\PP(\phi_X(X) = 1)$ is related to the size of the partition $\mathcal{S}$. This relationship can be made precise in the case of symmetric matrices $\bP$, and is the object of some future work.

\subsection{Spectral algorithm for finding optimal decomposition}

In this section, we present an approximation algorithm to optimal decomposition. Indeed, in general the problem is combinatorial and unless $\lambda$ is large enough, in which case it reduces to finding disjoint components in the bipartite graph, it is a hard problem. In particular, Corollary~\ref{cor:conductance} suggests that this problem is related to the graph conductance problem in some contexts, which is a hard problem to solve exactly. However, a reasonable approach is to look at a spectral clustering approach, in which the goal is to find a well-balanced (in terms of size) partition of nodes such that they are well separated (the cut between the partition is small). This is motivated by results in spectral graph theory, namely Cheeger's Inequality \cite{spielman2007spectral}, which relate the value of the conductance of the graph to the second eigenvalue of the so-called normalize Laplacian matrix. However, because of the special structure of our graph, namely a bipartite graph describing a joint distribution, we will exhibit a relationship between the normalized Laplacian matrix, and the spectral decomposition in Theorem~\ref{thm:max_corr_def}, and show that the second eigenvalue of the normalized laplacian is in fact the maximal correlation coefficient.

Before we proceed, let us recall the definition of the normalized Laplacian matrix. Let $G = (\mathcal{V},\mathcal{E}, w)$ be an undirected graph, with $|\mathcal{V}| = n$ nodes, and weights on the edges $w_{i,j}$. We let $\bA$ be the $n \times n$ adjacency matrix defined as $(\bA)_{i,j} = w_{i,j}$. Let $\bD$ be the diagonal matrix with diagonal entries $(\bD)_{i,i} = \sum_{i,j} w_{i,j}$. Then, the normalized Laplacian matrix $\bN$ is defined as $\bN \vcentcolon = \bI - \bD^{-1/2} \bA \bD^{-1/2}$.
The following theorem links the second smallest eigenvalue of $\bN$ for the bipartite representation of $p_{X,Y}$, with $\rho^2_m(X;Y)$ .
\begin{theorem}
Denote by $\nu$ the second smallest eigenvalue of $\bN$. Then, $\nu = 1 - \rho_m^2(X;Y)$.
\end{theorem}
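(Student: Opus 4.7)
The plan is to reduce the spectrum of the normalized Laplacian $\mathbf{N}$ of the bipartite graph to the singular values of the matrix $\mathbf{Q}$ appearing in Theorem~\ref{thm:max_corr_def}, and then invoke that theorem. The key observation is that the adjacency matrix of a bipartite graph has a block anti-diagonal form, and the same is true of its symmetric normalization.

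First I would write down the adjacency matrix and degree matrix of the bipartite representation. Ordering the $n_X+n_Y$ vertices with the $\mathcal{X}$-side first, we have
\begin{align*}
\bA = \begin{pmatrix} 0 & \bP \\ \bP^T & 0 \end{pmatrix}, \qquad \bD = \begin{pmatrix} \bD_X & 0 \\ 0 & \bD_Y \end{pmatrix},
\end{align*}
using the fact that the $i$-th row sum of $\bP$ is $p_X(i)$ and the $j$-th column sum is $p_Y(j)$. Substituting into $\bN = \bI - \bD^{-1/2}\bA\bD^{-1/2}$ gives
\begin{align*}
\bN = \bI - \begin{pmatrix} 0 & \bQ \\ \bQ^T & 0 \end{pmatrix},
\end{align*}
with $\bQ = \bD_X^{-1/2}\bP\bD_Y^{-1/2}$ exactly as in Theorem~\ref{thm:max_corr_def}.

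Next I would diagonalize the block anti-diagonal matrix using the SVD $\bQ = \bU \mathbf{\Sigma} \bV^T$. A direct check shows that for each singular triple $(\sigma_i, \mathbf{u}_i, \mathbf{v}_i)$, the vectors $\tfrac{1}{\sqrt{2}}(\mathbf{u}_i^T, \pm\mathbf{v}_i^T)^T$ are eigenvectors of the block matrix with eigenvalues $\pm\sigma_i$; any mismatch in dimensions between $n_X$ and $n_Y$ simply contributes extra zero eigenvalues. Consequently the spectrum of $\bN$ consists of $\{1-\sigma_i\}\cup\{1+\sigma_i\}$, augmented by copies of $1$ for the dimension-gap eigenvalues. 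The smallest eigenvalue is therefore $1-\sigma_{\max}(\bQ)$, and the second smallest is $1-\sigma_{\text{2nd}}(\bQ)$.

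Finally I would identify these two extremal singular values. The largest singular value of $\bQ$ is $1$, as witnessed by the right singular vector proportional to $(\sqrt{p_Y(j)})_j$, since
\begin{align*}
\bQ \sqrt{\mathbf{p}_Y} = \bD_X^{-1/2} \bP \mathbf{1} = \bD_X^{-1/2}\mathbf{p}_X = \sqrt{\mathbf{p}_X},
\end{align*}
so $1-\sigma_{\max}(\bQ)=0$ gives the usual zero eigenvalue of the normalized Laplacian. By Theorem~\ref{thm:max_corr_def}, the next singular value is precisely $\rho_m^2(X;Y)$ in the paper's notation, yielding $\nu = 1 - \rho_m^2(X;Y)$. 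I do not expect a serious obstacle here: the only subtle point is ensuring that the extra $1$-eigenvalues coming from the $|n_X-n_Y|$-dimensional kernel of the off-diagonal block do not slip below the second singular value's contribution, but this is immediate since $\rho_m^2 \le 1$ implies $1-\rho_m^2 \le 1$ with equality only in the degenerate independent case.
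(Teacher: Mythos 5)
Your proof follows the same route as the paper's: block-decompose $\bA$ and $\bD$, observe that $\bD^{-1/2}\bA\bD^{-1/2}$ is block anti-diagonal with off-diagonal block $\bQ$, and conclude that the spectrum of $\bN$ is $\{1\mp\sigma_i(\bQ)\}$ so that $\nu$ corresponds to the second largest singular value, identified with the maximal correlation via Theorem~\ref{thm:max_corr_def}. You supply some details the paper leaves implicit (the explicit $\tfrac{1}{\sqrt{2}}(\mathbf{u}_i^T,\pm\mathbf{v}_i^T)^T$ eigenvectors, the verification that $\sigma_{\max}(\bQ)=1$ via $\sqrt{\mathbf{p}_Y}$, and the dimension-gap eigenvalues), but the argument is the same.
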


\begin{proof}
Noticing that for the bipartite graph, the adjacency matrix $\bA$ and the corresponding $\bD$ can be written as:
\begin{align*}
& \bA  = \left[ \begin{array}{cc}
0 & \bP \\
\bP^T & 0
\end{array} \right] , \quad 
\bD  = \left[ \begin{array}{cc}
\bD_X & 0 \\
0 & \bD_Y
\end{array} \right]. \nonumber \\
\Rightarrow & \bD^{-1/2} \bA \bD^{-1/2} = \left[ \begin{array}{cc}
0 & \bD_X^{-1/2} \bP \bD_Y^{-1/2} \\
\bD_Y^{-1/2} \bP^T \bD_X^{-1/2} & 0
\end{array} \right] 
\end{align*}
Therefore, the eigenvalues of $\bD^{-1/2} \bA \bD^{-1/2}$ are the plus or minus singular values of $\bQ = \bD_X^{-1/2} \bP \bD_Y^{-1/2}$, and the second largest singular value of $\bQ$ must be equal to $1 - \nu$.
\end{proof}
This theorem relates the maximal correlation coefficient to $\nu$, and therefore indirectly to the conductance of the bipartite graph. In fact, the following corralary is immediate, and shows that when there exist disjoint components, the maximal correlation is precisely 1.

\begin{corollary}\label{thm:max_corr_gk}
Let $\rho_m(X;Y)$ be the maximal correlation of $p_{X,Y}$, and denote by $K$ the G-K common information. Then, $H(K) \neq 0$ if and only if $\rho_m(X;Y)  = 1$. In addition, call $m$ the multiplicity of the singular value $1$ in the singular value decomposition of $\bQ$. Then $H(K) \leq \log (m)$.
\end{corollary}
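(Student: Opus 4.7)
The plan is to combine the preceding theorem, which gives $\nu = 1 - \rho_m^2(X;Y)$ for the second smallest eigenvalue $\nu$ of the normalized Laplacian $\bN$, with the classical spectral characterization of the connected components of an undirected graph. The key identification is that the multiplicity $m$ of the singular value $1$ of $\bQ$ equals the number of connected components of the bipartite graph.

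First, I would trace the spectral correspondence established in the proof of the previous theorem. From the block form
\[
\bD^{-1/2}\bA\bD^{-1/2} = \begin{bmatrix} 0 & \bQ \\ \bQ^T & 0 \end{bmatrix},
\]
the eigenvalues of $\bD^{-1/2}\bA\bD^{-1/2}$ are exactly $\pm \sigma_i$ for $\sigma_i$ the singular values of $\bQ$. Therefore the eigenvalues of $\bN = \bI - \bD^{-1/2}\bA\bD^{-1/2}$ are $1 \mp \sigma_i$, and the eigenvalue $0$ of $\bN$ appears with multiplicity exactly equal to the multiplicity $m$ of the singular value $1$ of $\bQ$.

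Next, I would invoke the standard spectral graph theory fact (see \cite{spielman2007spectral}) that the multiplicity of the zero eigenvalue of the normalized Laplacian of an undirected graph equals its number of connected components. Applied to the bipartite representation of $p_{X,Y}$, which by Definition~\ref{def:common_inf} has $k$ connected components $\mathcal{C}_1,\ldots,\mathcal{C}_k$, this yields $m = k$. From this identification, both claims follow almost immediately. For the ``if and only if'' part: $\rho_m(X;Y) = 1$ is equivalent to $\nu = 0$, which is equivalent to $0$ being a repeated eigenvalue of $\bN$, i.e. $k = m \geq 2$, which by Definition~\ref{def:common_inf} is equivalent to $H(K) > 0$ (since $H(K)=0$ precisely when the graph is connected and $K$ is deterministic). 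For the bound, $K$ is supported on at most $k = m$ values, so by the uniform-maximizes-entropy inequality $H(K) \leq \log k = \log m$.

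The main obstacle is really only the careful bookkeeping that $1 - \sigma_i = 0$ for $\sigma_i = 1$ pairs up the zero eigenvalues of $\bN$ with the top singular values of $\bQ$ one-to-one, so that no multiplicities are lost in the correspondence; once this is verified, both statements reduce to the classical Laplacian null-space identity and a one-line entropy bound.
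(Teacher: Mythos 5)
Your proof is correct, and it supplies exactly the argument the paper leaves implicit when it calls the corollary ``immediate'' from the preceding theorem: the zero eigenspace of $\bN$ has dimension equal to both the multiplicity $m$ of the singular value $1$ of $\bQ$ and the number $k$ of connected components, whence $\rho_m(X;Y)=1 \iff \nu=0 \iff k\ge 2 \iff H(K)>0$, and $H(K)\le \log k=\log m$. The only unstated hypothesis (shared with the paper) is that every alphabet symbol has positive probability, so that $\bD^{-1/2}$ exists and no zero-weight isolated vertices inflate the component count.
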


In view of these results, we propose a simple and efficient algorithm for finding binary functions $\phi_X$ and $\phi_Y$. We compute the left and right singular vectors $u$ and $v$, respectively, corresponding to the second highest singular value of $\bQ$. Then, we chose a threshold and let $\phi_X(i) = 1$ if $u_i > t$ and $\phi_X(i) = -1$ otherwise, and similarly for $\phi_Y(j)$ and $v_j$. Note that for an appropriate choice of $t$, this algorithm finds disjoint components in the bipartite graph, if they exist.
\begin{remark}
The above algorithm can be iteratively applied to each partition in order to find functions $\phi_X$ and $\phi_Y$ that have non-binary range, or equivalently to find more \emph{clusters}.
\end{remark}
\section{Concluding Remarks}

The optimization problem in \eqref{eq:lagrangian} leads to a couple of interesting observations. First, we have only considered the case where the function $\phi_X$ is a function of one single symbol $X$. A natural extension would be to consider the product distribution $p_{X^n,Y^n}$ and functions that take as input $n$ symbols. In contrast with the case of G\'acs-K\"orner common information, in which taking the product distribution could not help as the number of disjoint components would not increase, the optimization in \eqref{eq:lagrangian} will likely improve by taking product distributions as it increases the options available for edge removal. Another interesting open question is related to the relationship between the maximal correlation and the optimal source decomposition. In particular, we have shown that the maximal correlation coefficient can be used as an approximation to optimal decomposition, but only for some specific $\lambda$. A question of interest would be to determine whether a similar spectral method can be used to approximate the trade-off fully.

\bibliographystyle{./biblio/IEEEtran}
\bibliography{./biblio/IEEEabrv,references}

\end{document}